  \def\vertexscale{1}
  \def\arrowscale{1.2}
  \def\arrowstyle{latex'} 
  \newcommand{\drawvertex}[1]{\path #1 coordinate (vtemp);\filldraw[scale=\vertexscale] (vtemp) circle (0.05)}
\newtheorem{theorem}{Theorem}[section]
\newtheorem{lemma}[theorem]{Lemma}
\newtheorem{proposition}[theorem]{Proposition}
\theoremstyle{definition}
\newtheorem{definition}[theorem]{Definition}
\newtheorem{example}[theorem]{Example}
\newtheorem{remark}[theorem]{Remark}
\title{Generalized parity proofs of the Kochen-Specker theorem}
\author{Petr Lison\v{e}k${}^1$, Robert Raussendorf${}^2$, 
Vijaykumar Singh${}^{1,2}$ \vspace{2mm} \\  
$^1$Department of Mathematics,  Simon Fraser University,  Burnaby, BC,\\ 
V5A 1S6, Canada\\
$^{2}$Department of Physics and Astronomy, University of British Columbia,\\
 Vancouver, BC, V6T 1Z1, Canada\\
\ \\
e-mail: {\tt plisonek@sfu.ca, raussen@phas.ubc.ca, vijay.k.1@gmail.com}
}
\begin{document}

\maketitle

\begin{center}
\end{center}

\begin{abstract}
We discuss two approaches to producing
generalized parity proofs
of the Kochen-Specker theorem.
Such proofs use contexts of observables
whose product is $I$ or $-I$; we call them constraints.
In the first approach, one starts with a fixed set of constraints
and methods of linear algebra are used to produce  subsets
that are generalized parity proofs.  
Coding theory methods are used for enumeration of the proofs by size.
In the second approach, one starts with the combinatorial structure
of the set of constraints and one looks for ways to suitably populate
this structure with observables. As well, we are able to show
that many combinatorial structures can not produce parity proofs.
\end{abstract}

\def\Z{{\mathbb Z}}
\def\C{{\mathbb C}}
\def\wt{{\rm wt}}
\def\la{\langle}
\def\ra{\rangle}

\def\B{\mathcal B}
\def\Co{\mathcal C}
\def\P{\mathcal P}
\def\I{\mathcal I}
\def\O{\mathcal O}

\def\z{{\bf 0}}
\def\o{{\bf 1}}

\section{Introduction}
\label{sec-intro}

In 1935, Einstein, Podolsky and Rosen asked the question of whether quantum mechanics can be considered a complete theory of physical phenomena \cite{EPR}. They ended by hinting at the possibility of classical descriptions of quantum mechanics in which the randomness of quantum measurement was modelled by a hidden probabilistic  parameter.  Hidden variable models (HVMs) turned out to be viable, as was demonstrated by Bohmian mechanics \cite{DBM}.  

However, if the seemingly most innocuous additional assumptions are made, HVMs can no longer reproduce the predictions of quantum mechanics. In a hidden variable model, in stark contrast to quantum mechanics, measurement outcomes exist prior to measurement, and are merely revealed. If the additional assumption is locality, then the Bell inequalities \cite{Bell} separate  quantum mechanics from all HVM descriptions. If the additional assumption is non-contextuality, the same is achieved by the Kochen-Specker Theorem~\cite{KS} (abbreviated KS Theorem henceforth).

Non-contextuality means the following: Let an observable $A$ be measured jointly with one of the compatible observables $B$ or $C$, and $B$  is incompatible with $C$. In quantum mechanics, two operators are said to compatible if and only if they commute.
 An HVM is non-contextual if the `pre-existing' measurement outcome $\mu(A)$ for  $A$ is independent of whether $A$ is measured jointly with $B$ or  with $C$. This seems a very reasonable requirement, since $A$ may be measured even before a decision has been made about whether to measure $A$ jointly with $B$ or with $C$. In other words, $A, ~B$ and $C$ can be all assigned measurement outcomes at once. Nevertheless, the assumption is of consequence.

\begin{theorem}[\cite{KS} Kochen and Specker]
\label{KST}
In Hilbert spaces of dimension\break $d \geq 3$, quantum mechanics cannot be described by any non-contextual hidden-variable model. 
\end{theorem}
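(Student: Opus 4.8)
The plan is to prove the contrapositive: from a hypothetical non-contextual hidden-variable model one extracts a finitary combinatorial object --- a $\{0,1\}$-valued function on the rays of the Hilbert space that is consistent with orthonormal bases --- and then one exhibits a finite configuration of rays on which no such function can exist. Concretely, let $\mu$ assign to every observable $A$ one of its eigenvalues, consistently with functional relations inside any commuting family (if $[A,B]=0$ and $A=g(B)$ then $\mu(A)=g(\mu(B))$), and let non-contextuality mean that $\mu(P)$ for a projector $P$ depends only on $P$ and not on the commuting family in which $P$ is viewed. Given an orthonormal basis $v_1,\dots,v_d$ of the $d$-dimensional space, set $A=\sum_i i\,P_{v_i}$; each $P_{v_i}$ is a polynomial in $A$ (Lagrange interpolation on the distinct eigenvalues $1,\dots,d$), so $\mu(P_{v_i})=\delta_{i,\mu(A)}$ and hence $\sum_i\mu(P_{v_i})=1$. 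By non-contextuality $f(v):=\mu(P_v)$ is then a well-defined function on unit vectors with $f(v)\in\{0,1\}$ and with exactly one vector of value $1$ in each orthonormal basis; call such an $f$ a KS-coloring. It remains to show that KS-colorings do not exist when $d\geq 3$.

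Next I would reduce to dimension three. Suppose $f$ is a KS-coloring of $\mathbb{C}^d$ with $d>3$, and let $V$ be a $3$-dimensional subspace with a fixed orthonormal basis $e_4,\dots,e_d$ of $V^{\perp}$. Appending this basis to any orthonormal triple of $V$ yields an orthonormal basis of $\mathbb{C}^d$, so $f(v_1)+f(v_2)+f(v_3)$ is constant over all orthonormal triples of $V$; since two orthogonal vectors are never both colored $1$, this constant is $0$ or $1$. If it is $1$, then $f|_V$ is a KS-coloring of $V$. If it is $0$, then $f|_V\equiv 0$; but $f$ is not identically zero, because some vector of any orthonormal basis of $\mathbb{C}^d$ has value $1$, so choosing $V$ through such a vector excludes that case. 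Either way one obtains a KS-coloring of a $3$-dimensional space, and restricting further to the real span of the configuration used below one may take it to be a coloring of $\mathbb{R}^3$.

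It then suffices to produce an explicit finite set $S\subset\mathbb{R}^3$ of unit vectors together with the list of its orthogonal triples --- for instance the $33$ directions of Peres built from the symmetry group of the cube, the $31$ vectors of Conway and Kochen, or Kochen and Specker's original $117$ --- and to verify that $S$ admits no KS-coloring. The verification is a finite propagation argument: normalizing a chosen vector to value $1$, the triples through it force its neighbours to $0$; pushing these forced values around the orthogonality hypergraph of $S$, every branch eventually produces either a triple all of whose members are forced to $0$, or two orthogonal vectors forced to $1$ --- a contradiction.

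The passage from $\mu$ to $f$ and the reduction to three dimensions are routine once the definitions are in place, so essentially all of the content sits in the last step, and even there verifying a \emph{given} configuration is a finite (and computer-assisted) case analysis. The genuine obstacle is to \emph{find} a configuration small and rigid enough that uncolorability is actually checkable: the orthogonality constraints couple far-apart rays only weakly and the space of a priori candidate colorings is large, so one needs a configuration whose orthogonality hypergraph is interlocking enough to close off every branch.
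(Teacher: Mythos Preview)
The paper does not supply its own proof of Theorem~\ref{KST}; it is quoted as a cited result of Kochen and Specker and serves only as background. What the paper develops instead is the machinery of \emph{parity proofs} (Definition~\ref{def-MKS}) in the style of Mermin: exhibit a finite set of constraints whose observables each appear an even number of times while an odd number of constraints multiply to $-I$, and derive a contradiction by evaluating the double product in two ways. That method produces KS proofs in specific dimensions (e.g.\ $d=4$ via the square~(\ref{PMsquare})) but is not used in the paper to establish the general $d\ge 3$ statement, and indeed the parity framework in the paper does not address $d=3$.

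Your proposal follows the original Kochen--Specker route: from a non-contextual value assignment extract a $\{0,1\}$-coloring of rays with exactly one $1$ per orthonormal basis, reduce to a three-dimensional subspace, and exhibit a finite uncolorable configuration (Peres~33, Conway--Kochen~31, or the original~117). The reduction steps are correct as you state them---in particular, choosing the $3$-dimensional subspace $V$ through a vector of value~$1$ forces the complementary basis vectors $e_4,\dots,e_d$ to value~$0$ and hence the restricted sum to~$1$---and the only deferred work is the finite case analysis on a named configuration, which is standard. So your approach is sound, and it is genuinely different from the paper's: yours is projection-based and additive, handles $d=3$, and proves the theorem in full; the paper's parity framework is observable-based and multiplicative, and is aimed not at proving Theorem~\ref{KST} from scratch but at generating, counting, and classifying large families of proofs in dimensions $\ge 4$.
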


For quantum mechanics, contextuality (i.e., the absence of  non-\break contextuality) is a feature that distinguishes it from classical physics. For quantum information theory, contextuality is also a resource. As an example, consider a quantum computer made of odd-dimensional  qudits. In such a setting, fault-tolerant quantum computation with distillation of so-called magic states \cite{BK} requires contextuality in order to be universal \cite{Eme}. 

Furthermore, proofs of the KS theorem can be translated into cryptographic protocols \cite{Cab} and into measurement-based quantum computations \cite{AB}. To make use of these correspondences, it is desirable to generate, classify and enumerate large numbers of KS proofs.
Millions of KS proofs have been identified, for example, in symmetric structures living in low-dimensional Hilbert spaces, such as the 600-cell \cite{WAMP}. 

In quantum mechanics, observables are represented by Hermitian
operators \cite{Sak}. Throughout this paper we consider {\em binary observables,} 
which are observables with eigenvalues 1 and $-1$.
A {\em context} is a set of pairwise commuting observables.
By a {\em constraint} we mean a set of pairwise commuting observables whose product 
is $I$ or $-I$.
In this paper, by a {\em parity proof of Kochen-Specker  theorem}
(abbreviated parity KS proof, or simply KS proof)
we mean a set $M$ of constraints such that 
each observable occurs in an even number of constraints in $M$,
and the number of constraints whose product is $-I$ is odd.
We explore the internal structure of such proofs
from two different viewpoints.
While most observable-based parity KS proofs occurring in literature
use Pauli observables (e.g., \cite{Mer,WA12,WA13}),
the methods discussed in this paper are {\em not} subject to this restriction.

In Section \ref{sec-MerminConstraint} we show that, given a set of constraints, the set of parity KS proofs constructed from these constraints
is in a one-to-one correspondence with a coset of a certain binary linear code. In particular, there are always 0 or $2^k$ such proofs
where $k \in \mathbb{N}$.
We discuss in more detail the case where the constraints
are derived from a set of intersecting orthogonal bases.
In this case our KS proofs  generalize the previous parity KS proofs 
(see, e.g., \cite{Per,WA11,WAMP}, and many other references)
since our method of deriving constraints from one orthogonal basis
is more general.
We observe (and give an illustration by an example)
that enumeration of parity proofs by size is possible indirectly
by an application of duality of vector spaces known from coding theory.
We present a method for finding parity proofs of small cardinality
(i.e., parity KS proofs consisting of a small number of constraints).

In Section \ref{Sec-Incidence-structures} we consider 
parity KS proofs that can be obtained from a given incidence structure.  In the incidence structure we let the points (vertices) correspond to observables 
and the blocks to constraints. 
The incidence structure then describes the intersection
pattern of putative constraints in a KS proof, and we ask
whether there exists an assignment of observables to points of
the structure that turns it into a KS proof.
We describe an algorithm that helps with answering this question
for a given incidence structure.
We consider the class of the simplest possible incidence structures
from which KS proofs can arise, namely the structures where
each point belongs to exactly two blocks
and each block contains exactly three points.
We decide the existence or non-existence of all such KS proofs
containing up to 15 observables (that is, up to 10 constraints).
The non-existence results apply to observables that
are of any dimension and are not necessarily Pauli observables.

\subsection{Notations}
\label{sec-obs}
We work in $\C^d$ with the inner product
$\la x,y\ra=\sum_{i=1}^d \overline{x_i}y_i$
where $\overline{z}$ denotes the complex conjugate of $z$.
Elements of $\C^d$ are considered as row vectors.
The transpose of a matrix $A$ is denoted $A^T$ and
the conjugate transpose of a matrix $A$ is denoted $A^{\dagger}$.
The symbols $\z$ and $\o$ denote the zero vector
and the all-one vector of an appropriate dimension.
Let $Y=\{y_1,\ldots,y_n\}$ be a set.
For any subset $Z\subseteq Y$
the {\em characteristic vector} of $Z$, denoted $\chi_Z$,
is defined by $(\chi_Z)_i=1$ if $y_i\in Z$
and $(\chi_Z)_i=0$ otherwise.
For a finite set $S$ let $|S|$ denote the number of elements of $S$.

A {\em graph} is a pair $G=(V,E)$ where $V$ is the set of {\em vertices}
and $E$ is the set of {\em edges,} which are unordered pairs
of vertices. We say that $u,v$ are the {\em endpoints} 
of the edge $\{u,v\}\in E$.
For $u,v\in V$, if $\{u,v\}\in E$,
then $u,v$ are called {\em adjacent.}
The {\em degree} of a vertex is the number of edges
to which it belongs (that is, the number of vertices
to which it is adjacent).  A {\em clique} in $G$ is a set $U\subseteq V$
such that any two distinct vertices in $U$ are adjacent.
Two graphs $G=(V,E)$ and $G'=(V',E')$ are {\em isomorphic}
if there exists a bijection $f:V\rightarrow V'$
such that $u,v$ are adjacent in $G$
if and only if $f(u),f(v)$ are adjacent in $G'$.
A graph $G=(V,E)$
is {\em connected} if for any $u,v\in V$, $u\neq v$,
there exists a sequence of vertices
$(w_0,w_1,\ldots,w_k)$ such that $w_i,w_{i+1}$
are adjacent for $i=0,\ldots,k-1$
and $w_0=u$, $w_k=v$.
For a positive integer $n$
the {\em complete graph} $K_n$
is defined as a graph on $n$ vertices such that any two distinct
vertices are adjacent.

Let $F$ be a field. For an $m\times n$ matrix $A$ over $F$,
the $F$-vector space of those $x\in F^n$ such that $Ax^T=\z$
is called the {\em kernel} of $A$ and denoted $\ker A$.

For $x\in\Z_2^n$ the number of non-zero coordinates of $x$
is called the {\em Hamming weight} of $x$, denoted $\wt(x)$.

\section{Parity proofs on a given set of constraints}
\label{sec-MerminConstraint}
Proofs of the Kochen-Specker (KS) theorem have been considerably simplified since they first appeared, and they come in various kinds. Some, such as the colouring proofs, are based on  interconnected orthogonal bases of $\C^d$ \cite{KS,Per}. The proofs that are of concern in this paper are based 
on interconnected contexts. Such proofs were first given by Mermin \cite{Mer}. Other proofs phrased in the framework of category theory \cite{Doer} also exist. 

To introduce the notion of a parity proof of the Kochen-Specker theorem, we  first review an example and then we give a general definition. 
Mermin's proof in $d=4$ \cite{Mer} invokes 9 Pauli observables in 6 contexts,
\begin{equation}
\label{PMsquare}
\begin{array}{ccc}
X\otimes I & I\otimes X & X\otimes X\\
I\otimes Y & Y\otimes I & Y\otimes Y\\
X\otimes Y & Y\otimes X & Z\otimes Z
\end{array},
\end{equation}
where 
\begin{equation}
\label{eq-XYZI}
I= \left( \begin{array}{cc} 1 & 0 \\ 0 & 1
\end{array} \right), \,X = \left( \begin{array}{cc} 0 & 1 \\ 1 & 0
\end{array} \right), \,
Z = \left( \begin{array}{cc} 1 & 0 \\ 0 & -1
\end{array} \right)\, , Y = \left( \begin{array}{cc} 0 & -i \\ i & 0
\end{array} \right).
\end{equation}
In (\ref{PMsquare}), the contexts are represented by the three rows and the three columns. Note that the observables in all rows and in all columns except the third column multiply to the identity $I$, whereas the observables in the third column multiply to $-I$. 

 In the above example, it is impossible to assign  `pre-existing' values to the observables, which can be seen as follows. Assume an assignment exists. First, the `pre-assigned' measurement outcomes $\mu(\cdot)$ must all be $+1$ or $-1$ which are the eigenvalues of the observables in question. 
In the quantum mechanics, if observables $A_1, A_2,\ldots,A_n,A_{n+1} $ belonging to a context satisfy $A_1A_2\cdots A_n=A_{n+1}$,  then their measurement outcomes satisfy $\mu (A_1) \mu (A_2)\cdots \mu(A_n) =\mu(A_{n+1})$,
see \cite[Section II]{Mer}.

Hence,  the six constraints among the observables translate into corresponding constraints on the pre-assigned values $\mu$. For example, 
$(X\otimes I)(I\otimes X)(X\otimes X) = I$ implies 
$\mu(X\otimes I)\mu(I\otimes X)\mu(X\otimes X) = 1$. Let us now work out the product of the values $\mu$ over all observables in any given context, and then over all contexts. The constraining relations give us the products within the six contexts, five times 1 and once $-1$. The product over all contexts is thus $-1$. However, we may work out this product differently. We observe that every value $\mu$ appears in exactly two contexts. Since $\mu=\pm1$ for all observables, the product of all the $\mu$
over all contexts must therefore be $+1$. Contradiction! No assignment of values $\mu$ to the observables in (\ref{PMsquare}) exists.


\subsection{Structure of parity proofs}
\label{subsec-gen-parity}

Recall from Section~1
that a {\em context} is a set of pairwise commuting observables.
For a context $C$ denote $P(C)=\prod_{O\in C} O$
the product of all observables in $C$.
Recall that a {\em constraint} was defined to be any context $C$
such that  $P(C)=I$ or $P(C)=-I$.

The following definition generalizes Mermin's approach \cite{Mer}
to proving the KS theorem that was outlined in the example given above.
This approach was subsequently applied also in other papers (e.g., \cite{WA12}).

\begin{definition}
\label{def-MKS}
A {\em parity proof of the Kochen-Specker theorem}
(parity KS proof) is a set $M$ of constraints such that
each observable occurs in an even number of constraints,
and the number of constraints $C\in M$ such that $P(C)=-I$ is odd.
The {\em size} of this parity KS proof is
defined to be the cardinality of the set $M$ (the number of constraints).
\end{definition}

Some authors use the term ``parity proof'' in a more narrow sense,
and we will return to this issue in Section \ref{sec-ray-proofs} below.

To see why any parity KS proof as introduced in Definition \ref{def-MKS} 
actually
proves the KS theorem, one generalizes the idea
of Mermin's original argument that we reviewed above.
We know that 
for each constraint $C$ the measurement outcomes $\mu(O)$
satisfy $\prod_{O\in C} \mu(O)=\pm 1$
where $\prod_{O\in C} O=\pm I$
and the same sign occurs in both equalities.
Let us compute the product $Q=\prod_{C}\prod_{O\in C} \mu(O)$
over all constraints $C$ occurring in the parity KS proof
in two different ways,
as was done in Mermin's example above.
One of these calculations shows that $Q=-1$, whereas
the other one yields $Q=1$, thus proving the Kochen-Specker theorem.

It has been noticed repeatedly
that the number of certain parity proofs associated with a given
set of orthogonal bases
is 0 or a power of 2,
but apparently this has never been explained.
(See, for example, Section 4.1 in \cite{WA12}.)
In the next theorem we advance this observation to a more general
setting and we prove it.

\begin{theorem}
\label{thm-Merm}
Let $\Co$ be an arbitrary finite set of constraints.
Then the number of parity proofs of the Kochen-Specker theorem
that are subsets of $\Co$
is 0 or $2^k$ for some non-negative integer $k$.
\end{theorem}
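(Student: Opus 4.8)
The plan is to phrase the set of parity KS proofs contained in $\Co$ as (a coset of) a binary linear code, and then invoke the elementary fact that a coset of a $\Z_2$-vector space is either empty or has cardinality a power of $2$. Concretely, enumerate the constraints in $\Co$ as $C_1,\ldots,C_m$ and the distinct observables occurring in them as $O_1,\ldots,O_n$. Form the incidence matrix $A\in\Z_2^{n\times m}$ with $A_{ij}=1$ iff $O_i\in C_j$, and form the sign vector $s\in\Z_2^m$ with $s_j=0$ if $P(C_j)=I$ and $s_j=1$ if $P(C_j)=-I$. A subset $M\subseteq\Co$ is represented by its characteristic vector $x=\chi_M\in\Z_2^n$ — wait, rather $x\in\Z_2^m$, indexed by the constraints. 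The condition that every observable lies in an even number of constraints of $M$ is exactly $Ax=\z$ over $\Z_2$; the condition that an odd number of chosen constraints have product $-I$ is exactly $s\cdot x=1$ (standard inner product mod $2$). So the parity KS proofs that are subsets of $\Co$ correspond bijectively to the solution set $S=\{x\in\Z_2^m : Ax=\z,\ s\cdot x=1\}$.

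Next I would analyze $S$. Let $K=\ker A\subseteq\Z_2^m$, a linear subspace. Then $S=\{x\in K : s\cdot x=1\}$ is the set of vectors in $K$ on which the linear functional $\phi:K\to\Z_2$, $\phi(x)=s\cdot x$, takes the value $1$. There are two cases. If $\phi$ is identically zero on $K$, then $S=\emptyset$, so the number of proofs is $0$. If $\phi$ is not identically zero on $K$, then $\phi$ is a surjective linear map $K\to\Z_2$, so its kernel $K_0=\ker\phi$ has index $2$ in $K$, i.e. $|K_0|=\frac12|K|$, and $S$ is the nontrivial coset $K\setminus K_0$, which also has cardinality $|K_0|=\frac12|K|$. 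Since $K$ is a $\Z_2$-vector space, $|K|=2^{\dim K}$, and hence in this case $|S|=2^{\dim K-1}$, a power of $2$. Setting $k=\dim K-1$ (when $S\neq\emptyset$) completes the argument.

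The only subtlety worth flagging is the bijection between subsets $M\subseteq\Co$ and characteristic vectors $x\in\Z_2^m$: distinct subsets give distinct vectors, so the count of proofs equals $|S|$ exactly, with no overcounting. One should also note $m\ge 1$ and that $S\subseteq K$ forces $\dim K\ge 1$ whenever $S\neq\emptyset$, so $k=\dim K-1\ge 0$ is a genuine non-negative integer. The correctness of the two parity conditions as the right encoding was already justified in the discussion preceding the theorem (the double-counting argument over all constraints), so that part is not an obstacle.

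Honestly, there is no serious obstacle here: the content is entirely the translation into linear-algebraic language, after which the power-of-two conclusion is the familiar statement that an affine subspace of $\Z_2^m$ is empty or has size $2^{\dim}$. If anything, the ``hard'' part is bookkeeping — making sure the incidence matrix, the sign vector, and the two defining conditions are set up consistently — and being careful to handle the empty case ($\phi\equiv 0$ on $\ker A$) separately so that the statement ``$0$ or $2^k$'' is correctly covered. This same framework (identifying proofs with a coset of $\ker A$ cut out by one linear functional) is presumably what the paper will reuse in its later enumeration-by-size results via coding-theoretic duality.
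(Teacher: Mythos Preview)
Your proposal is correct and follows essentially the same approach as the paper: encode subsets of $\Co$ by characteristic vectors, express the two defining conditions of a parity proof as linear constraints over $\Z_2$, and conclude that the solution set is empty or a coset of a binary linear subspace. The only cosmetic difference is packaging: the paper appends the sign vector $s$ as an extra row to the incidence matrix to form a single matrix $H'$ and identifies the proofs with solutions of $H'x^T=(0,\ldots,0,1)^T$ (hence a coset of $\ker H'$), whereas you keep $A$ and $s$ separate and analyze the linear functional $x\mapsto s\cdot x$ on $\ker A$; these are equivalent, and your explicit case split on whether $\phi$ vanishes on $\ker A$ makes the empty case slightly more visible.
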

\begin{proof}
Let $\Co=\{C_1,\ldots,C_n\}$.
Let $\O=\{O_1,\ldots,O_m\}$ be the set
of those observables that occur in at least two constraints in $\Co$.
Let $H$ be the $m\times n$
matrix over $\Z_2$ defined as follows.
We set $H_{i,j}=1$ if $O_i\in C_j$
and $H_{i,j}=0$ if $O_i\not\in C_j$.
Further we define the vector $p\in\Z_2^n$
by letting $p_j=0$ if $P(C_j)=I$ and $p_j=1$ if $P(C_j)=-I$.
Let $H'$ be the $(m+1)\times n$ matrix obtained
by appending $p$ to $H$ as the last row.
With any subset $\Co'\subseteq \Co$ we associate 
its characteristic vector $\chi_{\Co'}\in\Z_2^n$.
Then $\Co'$ is 
a parity proof of the Kochen-Specker theorem 
if and only if $H'\chi_{\Co'}^T=(0,\ldots,0,1)^T$.
The set of such vectors $\chi_{\Co'}$ either is empty or it is a coset of
$\ker H'$ in $\Z_2^n$.  In the latter case, the cardinality
of this coset equals the cardinality of $\ker H'$,
which is $2^k$ where $k$ is the dimension of $\ker H'$.
\end{proof}

\begin{remark}
\label{rem-list-samp}
Once a basis for $\ker H'$ has been determined,
an efficient exhaustive listing of parity proofs is possible,
for example in a Gray code ordering \cite{Sav}. 
Also, uniform sampling of the proofs becomes easy.
\end{remark}

\subsection{Parity proofs associated with a set of orthogonal bases}
\label{subsec-ONB}

For an application of Theorem \ref{thm-Merm}
we need a set of constraints.
One possible way of constructing a set of constraints
is as follows.  One starts with a set $\B$ of orthogonal bases
of $\C^d$.  Let us assume for a moment that $\B$ is a set of 
{\em orthonormal} bases, although we will see shortly
that this restriction can be removed easily.
For each basis $B\in\B$,
say $B=\{b_1,\ldots,b_d\}$,  and for each $\lambda \in\Z_2^d$
one constructs the observable
\begin{equation}
\label{eq-OBl-1}
O_B(\lambda)=\sum_{i=1}^d (-1)^{\lambda_i} b_i^{\dagger} b_i.
\end{equation}
In this subsection we analyze how such observables
can be combined to produce constraints and subsequently 
we analyze how parity KS proofs can be constructed from such constraints.

Note that it is sufficient to start with a set of orthogonal bases
and compute the observables $O_B(\lambda)$ by
\begin{equation}
\label{eq-OBl-2}
O_B(\lambda)=\sum_{i=1}^d (-1)^{\lambda_i} \frac{b_i^{\dagger} b_i}{\la b_i,b_i\ra}.
\end{equation}
When doing exact computations in computer algebra systems such as Maple or Magma, 
the formula (\ref{eq-OBl-2}) is superior
to (\ref{eq-OBl-1}) as it avoids the unnecessary introduction
of the square roots needed to normalize the $b_i$.
In practice, one will precompute
the matrices $P_i:=\frac{b_i^{\dagger} b_i}{\la b_i,b_i\ra}$
and then $O_B(\lambda)$ are found as signed sums of the $P_i$.

Note that $O_B(\lambda+\mu)=O_B(\lambda)O_B(\mu)$ for all $B,\lambda,\mu$,
 hence $O_B(\lambda)$ and $O_B(\mu)$ commute for all $B,\lambda,\mu$.
Also, all eigenvalues of $O_B(\lambda)$ are $1$ or $-1$,
in accordance with our definition of observable.
Further, $O_B(\lambda+\o)=-O_B(\lambda)$.
Consider a subset $T\subseteq \Z_2^d$.
Then 
$\prod_{\lambda\in T} O_B(\lambda)=O_B(\sum_{\lambda\in T} \lambda)$,
hence 
$\prod_{\lambda\in T} O_B(\lambda)=I$ if and only if
$\sum_{\lambda\in T} \lambda =\z$
and
$\prod_{\lambda\in T} O_B(\lambda)=-I$ if and only if
$\sum_{\lambda\in T} \lambda  =\o$.
Thus, for a fixed $B$, a set $\{ O_B(\lambda) : \lambda\in T \}$
is a constraint if and only if 
$\sum_{\lambda\in T} \lambda$ is $\z$ or $\o$.
For the purpose of constructing constraints, 
one can restrict attention
to the set of vectors
\[
K := \{ \lambda \in\Z_2^d : \lambda_1=0,\:  \lambda\neq \z \}
\]
and further restrict to subsets $T\subseteq K$
such that $\sum_{\lambda\in T} \lambda =\z$.
This can be justified as follows:
Any constraint 
which is of the form $\{O_B(\lambda) : \lambda \in T\}$
with unrestricted vectors $\lambda$ 
can be transformed to a constraint of the same form
with vectors $\lambda$ restricted to the set $K$
by adding $\o$ to some of the $\lambda$s,
which only flips the sign of the corresponding observables
$O_B(\lambda)$, hence this transformation
applied to a constraint produces again a constraint.
By Definition \ref{def-MKS},
in any parity KS proof each observable occurs an even number of times.
Thus, there is an odd number of constraints with product $-I$
{\em before} the sign flip 
if and only if
there is an odd number of constraints with product $-I$
{\em after} the sign flip.

By Definition \ref{def-MKS}, any observable occurring in a parity KS proof
must occur in at least two constraints.  Thus for each $B\in\B$
we can further
restrict attention to the observables $O_B(\lambda)$
whose vectors $\lambda$ belong to the set
\[
L_B := \{ \lambda\in K : 
(\exists B'\in\B,\, B'\neq B)(\exists \lambda'\in K)
O_B(\lambda)=\pm O_{B'}(\lambda') \}.
\]
Let $n_B=|L_B|$ for each $B\in\B$
and without loss of generality assume $n_B>0$ for all
$B\in\B$.  This means that at this point
we delete from the computation
all $O_B(\lambda)$ for which $\lambda\not\in L_B$
and also we delete from $\B$ all $B\in\B$ 
for which $n_B=0$ (and their $O_B(\lambda)$s).

The constraints associated with any $B\in \B$ are
precisely of the form $\{ O_B(\lambda): \lambda \in T\}$
 where
$\emptyset\neq T\subseteq L_B$ and  $\sum_{\lambda \in T} \lambda=\z$.
Thus, define
\[
U_B := \left\{ T \ :\  T\subseteq L_B,\: \sum_{\lambda \in T} \lambda=\z \right\}.
\]
For each $B$, the set $U_B$ is a vector space  over $\Z_2$.
This can be seen by representing an element $T\in U_B$ 
by its characteristic vector $\chi_T$,
once we fix an ordering (labeling) of the set 
$L_B=\{\lambda^1,\lambda^2,\ldots,\lambda^{n_B}\}$.
In this representation, $U_B$ is a subspace of $\Z_2^{n_B}$.

In the computer implementation of the algorithm one now identifies
the $O_B(\lambda)$ that are equal up to the sign, across all bases  $B\in\B$.
Any such class of observables is henceforth treated as one single observable.

One can now construct parity KS proofs by taking 
\begin{equation}
\label{eq-Co}
\Co := \bigcup_{B\in\B} \{ \{ O_B(\lambda) : \lambda \in T \} : T \in U_B, T\neq\emptyset \} 
\end{equation}
in Theorem \ref{thm-Merm}.
The proof of Theorem \ref{thm-Merm}
and Remark \ref{rem-list-samp} then allow us to exhaustively list
and/or uniformly sample from the set of parity KS proofs associated with $\B$.

Due to  the fact that experimental realizations
of KS proofs are presently limited to KS proofs of small size,
it is interesting to specifically address finding parity KS proofs
with few constraints.  Upon rereading the proof of Theorem \ref{thm-Merm},
this is equivalent to 
finding vectors $x$ 
of small Hamming weight satisfying $H'x^T=(0,0,\ldots,0,1)^T$.
Such vectors can be found using the {\em meet-in-the-middle}
idea:  If one looks for vectors of Hamming weight at most $w$,
then it is sufficient to compute,
for all vectors $y$ of Hamming weight at most $\lceil w/2 \rceil$,
the pairs $(y,H'y^T)$  and store them in a table.
This table is indexed by the second components of the pairs.
Upon setting $x=y^1+y^2$ where $y^1$ and $y^2$ have weight 
at most $\lceil w/2 \rceil$, 
the condition $H'x^T=(0,0,\ldots,0,1)^T$
becomes equivalent to $H'(y^1)^T+H'(y^2)^T=(0,0,\ldots,0,1)^T$.
For each vector $y$ of Hamming weight at most $\lceil w/2 \rceil$
one stores the pair $(y,H'y^T)$ in the table
and at the same time one queries the table for the
existence of a pair (or pairs) 
of the form $(t,u)$ where $u=H'y^T+(0,0,\ldots,0,1)^T$.
If such pair(s) is/are found in the table, then $y+t$
defines a parity KS proof of cardinality at most $w$.

If one is not interested in finding just small parity KS proofs,
but rather one seeks to have access to all parity KS proofs
derived from a set $\B$ of orthogonal bases,
then the method outlined in this section
should be modified in the following way.  
Let $B\in\B$ be fixed.
Instead of creating
one constraint (and thus one column of matrix $H'$) for each 
non-empty element
of $U_B$, the columns of $H'$ are created to correspond to a basis
of $U_B$.  This often makes the matrix $H'$ much smaller
(however the Hamming weight of a solution
to $H'x^T=(0,0,\ldots,0,1)^T$ no longer corresponds
to the size of the KS proof that it represents).

\subsection{Parity proofs based on rays}
\label{sec-ray-proofs}

Let $\B$ be a set of orthogonal bases in $\C^d$.
For any $B\in\B$ and $v\in B$ let $S_v:=I-2\frac{v^{\dagger}v}{\la v,v\ra}$.
Note that $S_v=O_B(\lambda)$ for a certain vector $\lambda$
of weight 1.  In this way we associate to each vector $\lambda$
of weight 1 the 1-dimensional subspace of $\C^d$ spanned by $v$
and called a {\em ray.}
      
In some papers the concept of ``parity proof''
is used for the special type of KS proof in which 
rays are the primary objects.  To connect with our Definition \ref{def-MKS},
introduce the constraint
\begin{equation}
\label{ref-eq-CB}
C_B := \{ O_B(\lambda) : \lambda\in\Z_2^d,\; \wt(\lambda)=1 \}
\end{equation}
for each $B\in\B$.
Note that $\prod_{O\in C_B} O=-I$ for each $B$. 
After making this connection,
we see that a special type of a parity KS proof,
which we will call a {\em ray parity proof} in this paper,
is obtained
from any set $\B$ of orthogonal bases of $\C^d$ 
that satisfies the following two conditions:
The cardinality of $\B$ is odd and each ray in $\C^d$
belongs to an even number of bases in $\B$.

As an immediate corollary of Theorem \ref{thm-Merm}
we see that the number of ray parity proofs arising from
$\B$ is 0 or $2^k$ for some non-negative integer $k$.

A common way of constructing ray parity proofs
is as follows.  One starts with a set of rays, say $R$.
More precisely, $R$ is a set of vectors spanning the rays.
One finds
the set of all orthogonal bases that are subsets of $R$, let us call
this set $\B_R$. 
Computationally, this can be done as follows:
Form the {\em orthogonality graph} $G_R$ whose vertices
are elements of $R$ and two vertices are adjacent if and only if
the corresponding rays are orthogonal.  An orthogonal basis of $\C^d$
that is a subset of $R$ corresponds to a clique of size $d$ in $G_R$.
This reduces the problem 
of constructing the set $\B_R$
to finding all cliques of size $d$ in $G_R$,
which can be handled for example by the very efficient clique finder
available in Magma \cite{Mag}.
Once the set $\B_R$ is obtained,
ray parity proofs can be constructed using Theorem \ref{thm-Merm},
in which for the observables $O_i$ we take all possible $S_v$
($v\in R$) and for contexts $C_j$ we take all $C_B$ ($B\in \B_R$)
as defined in (\ref{ref-eq-CB}).

\begin{remark}
\label{remark-ray}
Since the product of each constraint $C_B$ defined in (\ref{ref-eq-CB})
equals $-I$, for finding ray parity proofs
one can simplify the proof of Theorem \ref{thm-Merm}
to considering the matrix $H$ only, and then considering 
odd weight vectors in $\ker H$. 
\end{remark}

\subsection{Duality and weight distributions}
\label{sec-dua}

Let $x\cdot y=xy^T$ denote the usual inner
product on $\Z_2^n$ and for a subspace $S$ of $\Z_2^n$
let $S^{\perp}$ denote the {\em dual} of $S$ defined as
\[ S^{\perp}:=\{ x\in\Z_2^n : (\forall y\in S) x\cdot y=0\}. \]
Then $S^{\perp}$ is a subspace of $\Z_2^n$ of dimension $n-\dim S$
and $S^{\perp\perp}=S$.

For a subspace $S$ of $\Z_2^n$ let 
$A_i$ denote the number
of vectors of weight $i$ contained in $S$
and let
$B_i$ denote the number
of vectors of weight $i$ contained in $S^{\perp}$.
The sequences $(A_i)$ and $(B_i)$
are called {\em weight distributions} of $S$ and $S^{\perp}$ respectively.
The famous {\em MacWilliams Theorem} of coding theory
(see, for example, \cite[Chapter 5, Theorem 1]{MS})
gives a compact and easy to evaluate formula for computing the sequence
$(B_i)$ if the sequence $(A_i)$ is known (or vice versa, of course).

Moreover, 
a generalized version of the MacWilliams Theorem \cite{AM} 
allows one to compute the weight distribution
of a coset $a+D$ of $D$
in terms of the weight distributions of the codes $D^{\perp}$ 
and $D^{\perp}\cap \langle a\rangle^{\perp}$
where $a$ is a vector and $\langle a\rangle$ the subspace spanned by it.
We use this result with letting $D:=\ker H'$.
Thus, it is possible to count parity KS proofs by their size
{\em without} constructing them, and to do so in a way
that may be much faster than
exhaustively listing $\ker H'$ or  running the meet-in-the-middle
computation described in Section \ref{subsec-ONB}.
This approach is attractive in cases when the dimension of 
$(\ker H')^{\perp}$ is smaller than the dimension of $\ker H'$.

\subsection{Examples}
\label{sec-exa}

We illustrate the results 
of Sections \ref{subsec-gen-parity} through \ref{sec-dua} on two examples.
We use the computer algebra system Magma \cite{Mag}
for all computations.
Timings given below were obtained using Magma~2.19
running on Intel Core i7 CPU at 2.67~GHz.

\begin{example}
{\em Parity proofs in the 600-cell}

With notation as 
in Section \ref{sec-ray-proofs} let $R$ be the set of $60$ rays in $\C^4$
defined
by the vertices of the 600-cell \cite{WAMP}. As is known
there are $75$ orthogonal bases that are subsets of $R$ \cite{WAMP}.
We use Magma to find the 75 orthogonal bases.
By Theorem \ref{thm-Merm} and Remark \ref{remark-ray} we find
that there are precisely $2^{33}$ ray parity proofs of the Kochen-Specker theorem
found in the 600-cell. 


By another computation we found that each parity proof
arising from the 600-cell 
by the methods of Section \ref{subsec-ONB}
is a ray parity proof.  That is, using 
$O_B(\lambda)$ with $\wt(\lambda)>1$ does not yield any additional
parity proofs beyond what is listed above.
These two computations 
take only a few seconds.
\end{example}

\begin{example}
\label{ex-60-105}
{\em Parity proofs in the 60--105 system}

The $60$ rays in this system are introduced in \cite{WA11}
as joint eigenvectors of sets of commuting Pauli observables on two qubits.
These rays form $105$ orthogonal bases \cite{WA11}.
For simplicity denote $V:=\ker H$.
As in the previous example, it takes only a fraction of a second
to find the 105 bases and to find that the dimension of $V$ is 65
in this example. As we see vectors of odd weight in the basis for $V$,
we conclude that
there are precisely $2^{64}$ ray parity proofs of the Kochen-Specker theorem
found in the 60--105 system.  

This example allows us to illustrate an application 
of the material in Section \ref{sec-dua}.
While {\em listing} all $2^{64}$ ray parity proofs
would perhaps take hundreds of years of CPU time, we can still
compute their distribution according
to the number of bases (i.e., according to the number of constraints) 
that they involve in just a few hours, as follows. 

In our example $\dim V^{\perp}=105-65=40$ and listing all vectors
in $V^{\perp}$ is feasible (it takes a few hours);
this allows us to compute the weight distribution of $V^{\perp}$.
Then we can compute
the weight distribution of $V=(V^{\perp})^{\perp}$ using MacWilliams Theorem.
This method in fact is built into the {\tt WeightDistribution} function
in Magma.  We show the output in Appendix A below.
We conclude that the 60--105 system contains 160 ray parity proofs
involving 9 bases (constraints), 
18240 ray parity proofs involving 11 bases (constraints), and so on.
The computation takes less than 8 hours of CPU time in Magma.

By another calculation we determine that there are $2^{439}$
parity KS proofs obtained by the methods of Section \ref{subsec-ONB}
from the 60--105 system. 
In order to 
find the distribution of these proofs by size (number of constraints),
one could apply the generalized MacWilliams Theorem
(second part of Section \ref{sec-dua}).  This would amount
to computing the weight distribution of a subspace of $\Z_2^{495}$
of dimension $495-439=56$. This computation is beyond our resources.
However, a similar  calculation would be feasible 
for somewhat smaller cases, for example by taking a suitable
subset of the 60--105 system.
\end{example}

\begin{example}
{\em Ray parity proofs in the root system $E_8$}

In this example we take for $R$ the $120$ rays in $\C^8$
determined by the 120 pairs of roots of the lattice $E_8$.
In 0.3 second we find that there are $2025$ orthogonal bases,
and in another 0.3 second we do the linear algebra step
to find that the dimension of $V$ is 1941.
As there are odd weight vectors in the basis for $V$,
we conclude that 
there are $2^{1940}$ ray parity proofs in the $E_8$ root system.
\end{example}

\section{Parity proofs on an incidence structure}
\label{Sec-Incidence-structures}

Recall that we work with observables
that have eigenvalues $1$ and $-1$.
Thus we have $A^2=I$ for each such observable $A$.
While this class of observables contains all tensor products of Pauli operators,
we want to emphasize that the results of this section
are not limited to Pauli operators.

Now we give a completely different strategy for producing
 parity proofs. In Section \ref{subsec-gen-parity} we started
with a given set of constraints (consisting of fixed observables)
and we were finding its subsets
which are  parity proofs.  We now revert this process.
We start with a combinatorial structure of the set of constraints
and we are asking if and how this structure subsequently
can be populated with observables to  produce
 parity proofs.

\subsection{Incidence structures}

\begin{definition}
\label{def-inc-str}
Let $\P=\{v_1,\ldots,v_n\}$ be a finite set whose elements
we call {\em points}. An {\em incidence structure}
is a pair $(\P,\B)$ where $\B$ is a set of subsets of $\P$ (called {\em blocks})
such that each point in $\P$ occurs in an even number of blocks in $\B$.
Also each block contains at least three points.  
\end{definition}

Here points model observables (note that the same observable
may be assigned to distinct points) and blocks model constraints.
Both requirements on the incidence structure are directly linked
to the definition of a  parity proof.
The requirement on block size
follows from the fact that a constraint must 
contain at least three observables.

One class of incidence structures can be produced from cubic graphs.
A {\em cubic graph} is a graph in which each vertex has degree 3.
Clearly the number of vertices in a cubic graph must be even.
Exhaustive lists of connected cubic graphs on up to 14 vertices
(up to isomorphism) along with their drawings can be found
in \cite[pp.~126--144]{RW}.
Magma incorporates B.D. McKay's system {\em nauty} \cite{nauty}  
that contains a very efficient procedure
for generating graphs up to isomorphism, and we have used it in our computations.
The generation can be restricted by vertex degree, 
number of edges, connectedness, and
many other graph parameters.

Given a connected cubic graph $G=(V,E)$, we produce the  incidence structure
$I=(E,\B)$ such that the points of $I$ are the edges of $G$
and for each vertex $v\in V$ of $G$
there is exactly one corresponding block $b_v\in\B$ in $I$.
The block
$b_v$ contains precisely the points of $I$
that represent those edges of $G$ whose one endpoint is $v$.
Since $G$ is connected, $G$ can not be decomposed into the union
of smaller graphs on disjoint vertex sets.
It follows that the incidence structure $I$ constructed from $G$
does not decompose into a disjoint union of smaller incidence structures.

Note that incidence structures constructed in this way from connected cubic graphs
are the {\em smallest (or simplest)} incidence structures
relevant to our paper in the sense
that each point belongs to exactly two blocks and each block has
size exactly three. However, many other incidence structures
can be constructed as well, for example by starting from
graphs in which each vertex has degree {\em at least} three,
or more generally starting from hypergraphs.


\begin{example}
\label{ex-Pasch-conf}
Consider the complete graph $K_4$ on the vertex set $\{1,2,3,4\}$.
Label its edges as 
$e_1=\{1,2\}$, $e_2=\{1,3\}$, $e_3=\{1,4\}$, 
$e_4=\{2,3\}$, $e_5=\{2,4\}$, $e_6=\{3,4\}$. 
The procedure given above produces the incidence structure $(\P,\B)$
whose set of points
is $\P=\{e_1,e_2,e_3,e_4,e_5,e_6\}$ and the set of blocks is
$\B=\{b_1,b_2,b_3,b_4\}$ where
$b_1=\{e_1,e_2,e_3\}$, $b_2=\{e_1,e_4,e_5\}$, 
$b_3=\{e_2,e_4,e_6\}$, $b_4=\{e_3,e_5,e_6\}$.
This  incidence structure is well known as the {\em Pasch configuration.}
\end{example}

Given an incidence structure, we ask if there is an assignment
of observables to its points such that blocks become constraints
and the incidence structure becomes a  parity proof.
This question can be answered using the following lemma.

\begin{lemma}
\label{lem-IncS-Merm}
Let $\I=(\P,\B)$ be an incidence structure. Let us assign to each point $p\in\P$
the observable $O(p)$ and suppose that under this assignment
each block in $\B$ becomes a constraint.
Then under this assignment $\B$ becomes a  parity proof if and only if
\begin{equation}
\label{eq-pro-pro}
\prod_{b\in\B} \prod_{p\in b} O(p) = -I.
\end{equation}
\end{lemma}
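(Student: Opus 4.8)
The plan is to unwind both directions of the equivalence directly from Definition~\ref{def-MKS} by analyzing the product in \eqref{eq-pro-pro} in two different ways, exactly mirroring the double-counting argument used in Mermin's example and in the justification of Definition~\ref{def-MKS}. Since each block has already become a constraint under the assignment $p\mapsto O(p)$, we know that for each $b\in\B$ the inner product $\prod_{p\in b} O(p)$ equals $I$ or $-I$; write $\epsilon_b\in\{+1,-1\}$ for this sign, so $\prod_{p\in b}\prod_{p\in b}O(p)=\prod_{b\in\B}\epsilon_b\cdot I$. Hence the left side of \eqref{eq-pro-pro} equals $\left(\prod_{b\in\B}\epsilon_b\right) I$, and condition \eqref{eq-pro-pro} is equivalent to saying that the number of blocks $b$ with $\epsilon_b=-1$ (i.e.\ with $P(b)=-I$) is odd.

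Next I would observe that the left side of \eqref{eq-pro-pro} can also be regrouped by points rather than by blocks. For a fixed point $p\in\P$, let $d(p)$ denote the number of blocks of $\B$ containing $p$; then in the full product each factor $O(p)$ appears exactly $d(p)$ times. Here I must be slightly careful about the order of multiplication, since observables in different blocks need not commute; but within this computation I only need that $O(p)^2=I$ for every observable (this is the point stressed at the start of Section~\ref{Sec-Incidence-structures}, that observables with eigenvalues $\pm1$ square to the identity), together with the fact that all the $\epsilon_b I$ are central scalars. More concretely: processing the blocks one at a time, each block $b$ contributes the scalar $\epsilon_b I$, and scalars commute through everything, so $\prod_{b\in\B}\prod_{p\in b}O(p)=\prod_{b\in\B}\epsilon_b \cdot I$ with no ordering subtlety at all — the apparent regrouping-by-points is really just the statement that $\prod_{b\in\B}\epsilon_b$ records the parity of the number of $-I$ blocks. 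I would therefore present the argument cleanly as: (i) each block contributes a central sign $\epsilon_b\in\{\pm1\}$; (ii) the product over all blocks is the scalar $\prod_b\epsilon_b$; (iii) this scalar is $-1$ iff an odd number of blocks have $P(b)=-I$; (iv) by Definition~\ref{def-inc-str} every point already lies in an even number of blocks, so the \emph{only} remaining requirement in Definition~\ref{def-MKS} for $\B$ to be a parity proof is precisely that the number of $-I$ blocks be odd. Chaining these gives the equivalence.

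Both directions then follow immediately: if $\B$ is a parity proof, the number of $-I$ blocks is odd, so $\prod_b\epsilon_b=-1$ and \eqref{eq-pro-pro} holds; conversely, if \eqref{eq-pro-pro} holds then $\prod_b\epsilon_b=-1$, forcing an odd number of $-I$ blocks, and since the even-incidence condition on points is guaranteed by the definition of an incidence structure, $\B$ satisfies both clauses of Definition~\ref{def-MKS}. The main (and really the only) subtlety to get right in the write-up is the remark that, even though the observables assigned to different points need not commute, the product in \eqref{eq-pro-pro} is unambiguous modulo its value because each block collapses to a central scalar $\pm I$ before any non-commuting factors from a different block are encountered; I would state this explicitly so the reader does not worry about the order of the double product. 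Everything else is bookkeeping against Definition~\ref{def-MKS} and Definition~\ref{def-inc-str}.
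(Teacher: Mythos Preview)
Your proposal is correct and follows essentially the same route as the paper's own (very brief) proof: since Definition~\ref{def-inc-str} already guarantees the even-incidence condition, the only clause of Definition~\ref{def-MKS} left to verify is the parity of the number of $-I$ blocks, which is exactly what \eqref{eq-pro-pro} records once each inner product collapses to the central scalar $\epsilon_b I$. Your additional remark about why the double product is order-independent (each block yields $\pm I$ before any potentially non-commuting factors from another block intervene) is a useful clarification the paper leaves implicit; note the typo $\prod_{p\in b}\prod_{p\in b}O(p)$, which should read $\prod_{b\in\B}\prod_{p\in b}O(p)$.
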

\begin{proof}
According to Definition \ref{def-MKS} we only need to check
that the number of constraints whose product is $-I$ is odd.
This happens if and only if the product of products of all constraints is $-I$.
\end{proof}

\begin{example}
\label{ex-Pasch-not-Mermin}
We now show that
the Pasch configuration as introduced in Example \ref{ex-Pasch-conf}
can not produce  parity proofs.  
Suppose that we assigned observables to points of the Pasch configuration.
Recall that $A^{-1}=A$ for each observable $A$.
Without loss of generality, after the assignment of observables 
the Pasch configuration has the following form:
\begin{eqnarray*}
&&\{O_1,O_2,s_1O_1O_2\}, \{O_1,O_3,s_2O_1O_3\}, 
\{O_2,O_3,s_3O_2O_3\},\\ 
&&\{s_1O_1O_2,s_2O_1O_3,s_3O_2O_3\}
\end{eqnarray*}
where
$O_1,O_2,O_3$ are observables
and $s_1,s_2,s_3\in\{-1,1\}$, see Figure \ref{fig: pasch}.

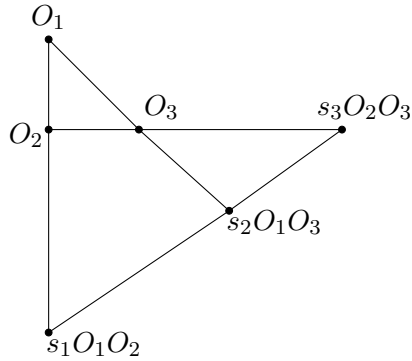
\begin{figure}[htb]
\begin{center}
  \def\vertexscale{3}
  \def\arrowscale{1.2}
  \def\arrowstyle{latex'}
\begin{tikzpicture}[scale=0.3]

  \path (1,18) coordinate  (n6) ;
   \path (1,14) coordinate (n5)  ;
   \path (1,5) coordinate  (n4)  ;
   \path (5,14) coordinate   (n3) ;
   \path (9,10.4) coordinate (n2) ;
   \path  (14,14) coordinate (n1) ;

\begin{scope}
  \foreach \from/\to in {n6/n5,n5/n4, n5/n3, n3/n1, n4/n2, n2/n1, n6/n3, n3/n2}
    \draw (\from) -- (\to);

\drawvertex{(n6)};
\drawvertex{(n5)};
\drawvertex{(n4)};
\drawvertex{(n3)};
\drawvertex{(n2)};
\drawvertex{(n1)};

    \draw (n6)++(0,1) node {$O_1$};
    \draw (n5)++(-1,-0.25) node {$O_2$};
    \draw (n4)++(2,-0.5) node {$s_1O_1O_2$};
    \draw (n3)++(1,1) node {$O_3$};
    \draw (n2)++(2,-0.5) node {$s_2 O_1 O_3$};
    \draw (n1)++(1,1) node {$s_3 O_2 O_3$};
\end{scope}
\end{tikzpicture}
\caption{Pasch configuration}
\label{fig: pasch}
\end{center}
\end{figure}

The left-hand side of the product (\ref{eq-pro-pro}) equals
\begin{eqnarray*}
(O_1O_2s_1O_1O_2) (O_1O_3s_2O_1O_3)
(O_2O_3s_3O_2O_3) (s_1O_1O_2s_2O_1O_3s_3O_2O_3)\\
=s_1^2s_2^2s_3^2 O_1^6O_2^6O_3^6=I
\end{eqnarray*}
where we used that $s_i^2=1$ for all $i$
and $O_i^2=I$ for all $i$. Also we used that any two $O_i$
commute, because any two $O_i$ occur together in some block of
the configuration.  

Note that Lemma \ref{lem-IncS-Merm} has the following consequence:
If the left-hand side  of the product (\ref{eq-pro-pro})
equals $I$ under {\em every} permissible assignment of observables
to points of $\I$, then no parity proof can be constructed from $\I$.
Thus in particular we conclude
that no assignment of observables to points of the Pasch configuration
produces a  parity proof.
\end{example}

\subsection{Finitely presented groups}
\label{subs-fpg}

The hand calculations done in Example \ref{ex-Pasch-not-Mermin}
quickly become complicated when
the size of the incidence structure increases.  Therefore one desires
an {\em automated} 
process, which can be implemented on a computer,
for finding suitable assignments of operators to points
of the incidence structure such  that a  parity proof is produced,
or proving that no such assignment exists.  This indeed is possible
as we show next.

A {\em free group} (written multiplicatively)
on generators $g_1,g_2,\ldots$ 
is the group that has as its elements all possible (associative)
products of the $g_i$s and their inverses and all such products
are assumed to be distinct. A {\em finitely presented group}
is a free group modulo a set of relations, each of which
has the form of an equality of two elements of the free group
(i.e., an equality of two products of powers of $g_i$). 
The {\em word problem} in a finitely presented group $G$
is the question whether two elements of $G$, both written
as products of powers of $g_i$, are equal modulo the set of relations
used to define $G$.

Fix an incidence structure $\I$ as in Definition \ref{def-inc-str}
and introduce
an assignment of operators to points of $\I$ such that each
block becomes a constraint.
This is done as in Example \ref{ex-Pasch-not-Mermin} above,
and it involves introducing scalars $s_i$ and observables $O_i$.
Let $A_p$ denote the observable assigned to point $p$;
each $A_p$ is a (generally non-commuting) product of some
of the $s_i$ and some of the $O_i$.
This assignment can be done as follows:  In each step,
take a point $p$ that does not have $A_p$ assigned yet.
If there exists a block $b$ containing $p$ such that all points of $b$
except $p$ have their observables assigned already, then the assignment of $A_p$
is forced up to $\pm 1$, which is accounted for by an introduction
of the scalar $s_p$.  Otherwise, let $A_p:=O_i$ where $O_i$ is a new generator.

Ultimately, we are interested in the the left-hand side  of 
product (\ref{eq-pro-pro})
and since each $s_i$ equals $\pm 1$
and it appears with an even exponent in the left-hand side  of (\ref{eq-pro-pro}),
it will have no contribution to the left-hand side  of (\ref{eq-pro-pro}).
Hence we discard the $s_i$ from our computations and we only
compute with the $O_i$.

We now think of the finitely presented group $G_{\I}$
on the generators $O_i$ modulo the set of relations
that are of the following two types.
(Note that the identity matrix $I$
is the identity element of $G_{\I}$.)

(i) For each block $B$ of $\I$ we introduce the relations
$A_uA_v=A_vA_u$ for any $u,v\in B$, $u\neq v$.

(ii) For each point $p$ of $\I$ we introduce the relation $A_p^2=I$,
since throughout the paper 
we deal only with observables that have eigenvalues $\pm 1$.

Note that the construction of  $G_{\I}$ depends
on how the assignments of $A_p$ were done,
and there is a considerable freedom in choosing those assignments.

\subsection{Knuth-Bendix algorithm}

The left-hand side  of (\ref{eq-pro-pro}),
being a product of constraints, always equals $I$ or $-I$.
By Lemma \ref{lem-IncS-Merm}
and the discussion in Section \ref{subs-fpg},
if the left-hand side  of (\ref{eq-pro-pro})
can be shown to be equal to $I$ in the group $G_{\I}$, then
$\I$ can not produce a  parity proof.

Let $F$ be a free group and $G$ a finitely presented group
created from $F$ as described in Section \ref{subs-fpg}.
{\em Knuth-Bendix algorithm} can be used to solve the word problem
in $G$.  An introductory exposition on Knuth-Bendix algorithm
can be found, for example, in \cite[Chapter 12]{HEE}.
We used the implementation of Knuth-Bendix algorithm  found
in Magma \cite{Mag}.  
Knuth-Bendix algorithm solves the word problem 
in $G$ by creating a rewriting system for $G$.
Any element $a\in F$ is reduced to its unique {\em canonical form,}
let us denote it $c(a)$,
by a repeated application of this rewriting system.  The crucial property
of this rewriting system is that for any $a,b\in F$
we have $a=b$ in $G$ if and only if $c(a)=c(b)$.

There are two ways in which we use Knuth-Bendix algorithm:

Firstly, we ask whether the left-hand side  of (\ref{eq-pro-pro})
is equal to $I$ in  $G_{\I}$.  If that is the case, we stop
with the conclusion that no  parity proofs
can be produced from $\I$. 

Otherwise, we augment the relations
that were used to define $G_{\I}$, as listed under (i) and (ii)
near the end of Section \ref{subs-fpg}, 
by new relations of the form $AB=BA$ where $A,B$
are some two observables assigned to points of $\I$. 
Thus we obtain a new group $G'_{\I}$.  If the left-hand side of (\ref{eq-pro-pro})
was not equal to $I$ in $G_{\I}$,
but it becomes equal to $I$ in $G_{\I}'$, then we know that
$A$ and $B$ must {\em not} commute in any assignment of observables
to points of $\I$ that produces a  parity proof on $\I$.
Such information can be then used to prune the search
for assignments of observables to points of $\I$.

\begin{example}
\label{ex-six-vertices}
Up to isomorphism there are exactly two connected cubic graphs
on six vertices, see \cite[page 127]{RW} where they are depicted
as graphs C2, C3. 
The graph C3 leads, by the general construction that we gave above, to the 
incidence structure that is depicted as a $3\times 3$
grid in Figure \ref{fig:k5problems}.
\begin{figure}[htb]
\begin{center}
  \def\vertexscale{3}
  \def\arrowscale{1.2}
  \def\arrowstyle{latex'}
\begin{tikzpicture}[scale=0.4]

  \path (1,18) coordinate (n1);
  \path (5.5,18) coordinate (n2) ;
  \path (10,18) coordinate (n3);
  
  \path (1,14) coordinate (n4);
  \path (5.5,14) coordinate (n5) ;
  \path (10,14)coordinate (n6);
  
  \path (1,10) coordinate (n7) ;
  \path (5.5,10) coordinate (n8) ;
  \path (10,10) coordinate (n9) ;
\begin{scope}
  \foreach \from/\to in {n1/n2,n2/n3, n4/n5, n5/n6, n7/n8, n8/n9, n1/n4,n4/n7, n2/n5, n3/n6, n5/n8, n6/n9}
  \draw (\from) -- (\to);

\drawvertex{(n1)};
\drawvertex{(n2)};
\drawvertex{(n3)};
\drawvertex{(n4)};
\drawvertex{(n5)};
\drawvertex{(n6)};
\drawvertex{(n7)};
\drawvertex{(n8)};
\drawvertex{(n9)};

  \draw (n1) ++ (0,1) node {$O_1$};
  \draw (n2) ++ (0,1) node  {$O_2$};
  \draw (n3) ++ (0,1) node  {$s_1O_1O_2$};
  
  \draw (n4) ++ (-0.75,1) node {$O_3$};
  \draw (n5) ++ (-1,1) node {$O_4$};
  \draw (n6) ++ (2,1) node  {$s_2O_3O_4$};
  
  \draw (n7) ++ (0,-1) node  {$s_3 O_1O_3$};
  \draw (n8) ++ (0,-1) node  {$s_4 O_2 O_4$};
  \draw (n9) ++ (1,-1) node  {$s_5O_1 O_2 O_3O_4$};
\end{scope}
\end{tikzpicture}
\caption{C3 Configuration}
\label{fig:k5problems}
\end{center}
\end{figure}
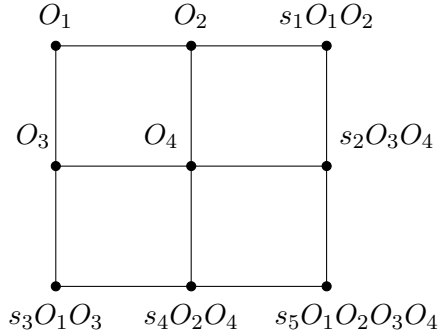
Assume that the assignments $A_p$ were done as
shown in Figure \ref{fig:k5problems}, and let $\B$ be the
set of six blocks in that figure.
Using Knuth-Bendix algorithm in Magma, the canonical form
of $\prod_{B\in\B} \prod_{p\in B} A_p$ is $(O_1O_4)^2$. 
Since the group element $(O_1O_4)^2$ is not equal to $I$ in $G_{\I}$,
we can not rule out the existence of parity proofs.

Since parity proofs may exist, we want to search for them.
Now let the labels in Figure \ref{fig:k5problems}
mean observables in $\C^{d\times d}$ for some fixed $d$.
We see that $(O_1O_4)^2\neq I$ is a necessary condition
for any assignment that produces a parity proof.
In fact the assignment $O_1=X\otimes I$, $O_2= I\otimes X$, 
$O_3=I\otimes Y$, $O_4=Y\otimes I$, where
$X,Y,Z,I$ are as in (\ref{eq-XYZI}) above,
was used by Mermin in \cite{Mer}
to give the well known proof of the Kochen-Specker theorem.
Whence  parity proofs exist for the C3 configuration.
\end{example}

\begin{example}
\label{ex-C2-config}
The graph C2 leads to the 
incidence structure that is depicted in Figure \ref{fig-C2}.
Assume that the assignments $A_p$ were done as
shown in Figure \ref{fig-C2}, and let $\B$ be the
set of six blocks in that figure.
Knuth-Bendix algorithm finds that the canonical form
of $\prod_{b\in\B} \prod_{p\in b} A_p$ is $I$, hence the left-hand side\
of  (\ref{eq-pro-pro}) equals $I$ for any admissible
assignment of observables to points of the C2 configuration.
By Lemma \ref{lem-IncS-Merm}, the C2 configuration
can not produce a parity proof.

\begin{figure}[htb]

\begin{center}
  \def\vertexscale{3}
  \def\arrowscale{1.2}
  \def\arrowstyle{latex'}
\begin{tikzpicture}[scale=0.4]

  \path (10,18) coordinate (n1);
  \path (5,18) coordinate (n2) ;
  \path (0,18) coordinate (n3);
  
  \path (5,14) coordinate (n4);
  \path (10,14) coordinate (n5) ;
  \path (14,14)coordinate (n6);
  
  \path (7.5,6) coordinate (n7) ;
  \path (5,10) coordinate (n8) ;
  \path (10,9) coordinate (n9) ;
\begin{scope}
  
 \foreach \from/\to in {n1/n2,n2/n3,n3/n8, n8/n7, n2/n4, n4/n8, n4/n5, n5/n6, n1/n5, n5/n9, n6/n9, n7/n9}
  \draw (\from) -- (\to);

\drawvertex{(n1)};
\drawvertex{(n2)};
\drawvertex{(n3)};
\drawvertex{(n4)};
\drawvertex{(n5)};
\drawvertex{(n6)};
\drawvertex{(n7)};
\drawvertex{(n8)};
\drawvertex{(n9)};

  \draw (n1) ++ (0,1) node {$O_1$};
  \draw (n2) ++ (0,1) node  {$O_2$};
  \draw (n3) ++ (0,1) node  {$s_1O_1O_2$};
  
  \draw (n4) ++ (-0.5,0.5) node {$O_4$};
  \draw (n5) ++ (-0.5,0.5) node {$O_3$};
  \draw (n6) ++ (0.5,0.5) node  {$s_2O_3O_4$};
  
  \draw (n7) ++ (0,-0.75) node  {$s_5 O_1 O_4$};
  \draw (n8) ++ (-1,-0.75) node  {$s_3 O_2 O_4$};
  \draw (n9) ++ (1,-0.75) node  {$s_4 O_1 O_3$};
\end{scope}
\end{tikzpicture}
\caption{C2 Configuration}
\label{fig-C2}
\end{center}
\end{figure}
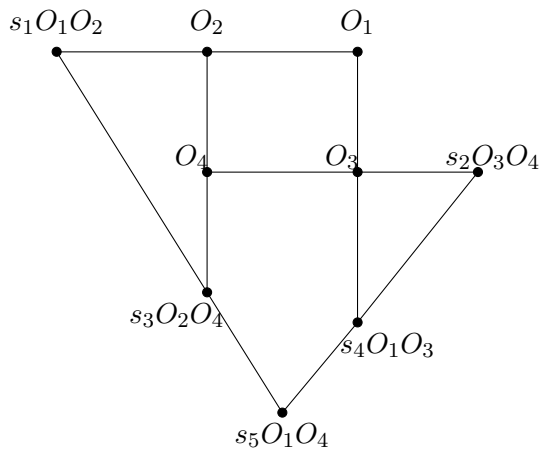
\end{example}

Using the methods outlined in this section we investigated
all incidence structures constructed from connected cubic graphs
(up to isomorphism) on up to 10 vertices.  The results are summarized
in the following statement.

\begin{proposition} 
The numbers of incidence structures
from connected cubic graphs according to producing  parity proofs
are given in the following table:

\begin{center}
\scriptsize
\begin{tabular}{|c|c|c|}
\hline
 & incidence structures  & incidence structures \\
number of vertices  &  producing   &  not producing  \\
in connected cubic graph &  parity proofs &  parity proofs  \\
\hline
\hline
4 & 0 & 1  \\
6 & 1 & 1  \\
8 & 2 & 3 \\
10 & 10 & 9 \\
\hline
\end{tabular}
\end{center}

\end{proposition}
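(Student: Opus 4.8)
The plan is to verify the proposition by a finite, exhaustive computation, closely following the method already developed in Sections~\ref{subs-fpg} and the Knuth-Bendix subsection. First I would enumerate, up to isomorphism, all connected cubic graphs on $4$, $6$, $8$, and $10$ vertices; by the degree-sum formula a cubic graph has an even number of vertices and $3n/2$ edges, and such lists are classical (they appear, e.g., in \cite[pp.~126--144]{RW}, and can be regenerated with \emph{nauty} \cite{nauty} as incorporated in Magma \cite{Mag}). For each such graph $G=(V,E)$ I form the incidence structure $\I=(E,\B)$ as in Section~\ref{Sec-Incidence-structures}: the points are the edges, and each vertex $v$ gives a block $b_v$ consisting of the three edges incident to $v$. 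Since every point then lies in exactly two blocks and every block has size exactly three, these are the simplest incidence structures relevant to the paper.

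Next, for each $\I$ I would run the procedure of Section~\ref{subs-fpg}: fix an order in which to process points, and whenever a point $p$ lies in a block whose other two points are already assigned, force $A_p$ (up to a sign $s_p$, which is discarded) to the product of those two observables; otherwise introduce a fresh generator $O_i$. This yields the finitely presented group $G_\I$ on the generators $O_i$ with the commutation relations (i) within each block and the involution relations (ii) $A_p^2 = I$. Using the Knuth-Bendix implementation in Magma I then compute the canonical form of $\prod_{b\in\B}\prod_{p\in b} A_p$, the left-hand side of~(\ref{eq-pro-pro}). By Lemma~\ref{lem-IncS-Merm} and the discussion following it, if this canonical form is $I$ then $\I$ produces no parity proof and is tallied in the right-hand column of the table; if it is not $I$, then~(\ref{eq-pro-pro}) can be satisfied and $\I$ does admit a parity proof, so it is tallied in the left-hand column. (For $n=6$ this reproduces exactly Examples~\ref{ex-six-vertices} and~\ref{ex-C2-config}, where C3 gives canonical form $(O_1O_4)^2\neq I$ and C2 gives $I$.) Summing the counts over all isomorphism classes at each value of $n$ produces the four rows of the table.

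Two points deserve care. First, to conclude that a non-identity canonical form genuinely yields a parity proof one must exhibit an actual assignment of finite-dimensional $\pm1$-eigenvalue observables realizing the block commutation relations while keeping the product equal to $-I$; the Mermin assignment $O_1=X\otimes I$, $O_2=I\otimes X$, $O_3=I\otimes Y$, $O_4=Y\otimes I$ handles C3, and for the remaining ``producing'' cases one gives tensor products of Pauli operators in a suitable dimension, checking that all required pairs commute and the forced product is $-I$. Second, when Knuth-Bendix does not immediately terminate on $G_\I$, one may instead work with a homomorphic image (e.g.\ by adjoining further commutation relations, obtaining the group $G'_\I$ of the Knuth-Bendix subsection) in which confluence is reached; since any relation of $G'_\I$ holds in $G_\I$, a canonical form equal to $I$ in $G'_\I$ still proves non-existence, while a non-identity form requires the explicit realization as above. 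The main obstacle is therefore not conceptual but computational bookkeeping: ensuring the generator-assignment order is made canonically (so isomorphic graphs are not double-counted and non-isomorphic ones are not conflated), handling the $19$ isomorphism classes at $n=10$ correctly, and confirming in each ``producing'' case that a concrete observable assignment exists. Once these are dispatched, the tallies in the table follow directly.
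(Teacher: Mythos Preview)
Your approach is essentially the same as the paper's: enumerate the connected cubic graphs, apply Knuth--Bendix to the group $G_\I$ to detect when the product in (\ref{eq-pro-pro}) is forced to equal $I$ (hence no parity proof), and for the remaining cases exhibit explicit Pauli-type observable assignments realizing a parity proof. This matches the paper's brief proof exactly, including the recognition that a non-trivial canonical form in $G_\I$ is not by itself a proof of existence and must be backed by a concrete assignment.

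One correction to your second caveat: the direction of implication is reversed. Since $G'_\I$ is obtained from $G_\I$ by \emph{adding} relations, $G'_\I$ is a quotient of $G_\I$; relations of $G_\I$ hold in $G'_\I$, not conversely. Thus if the product reduces to $I$ in $G'_\I$, this does \emph{not} establish that it equals $I$ in $G_\I$; it only shows that any assignment in which the extra commutation $AB=BA$ happens to hold cannot yield a parity proof. That is precisely how the paper uses $G'_\I$---to deduce that $A$ and $B$ must \emph{not} commute in any producing assignment, thereby pruning the search for observables---not as a shortcut for the non-existence column. For non-existence you must work in $G_\I$ itself. In the small cases treated here Knuth--Bendix terminates on $G_\I$ directly, so the issue does not arise in practice, but the logical point should be stated correctly.
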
 
\begin{proof}
For the incidence structures counted in the second column we have
found an assignment of observables to the points of the incidence structure
that produces a  parity proof.
For the incidence structures counted in the third column
we proved using the implementation of Knuth-Bendix algorithm
in Magma that they can not produce a  parity proof.
\end{proof}

\section*{Acknowledgment}

Research reported in this paper was 
supported
by the Natural Sciences and Engineering Research Council of Canada
(NSERC), Collaborative Research Group ``Mathematics of Quantum
Information'' of the Pacific Institute for the Mathematical Sciences (PIMS),
Intelligence Advanced Research Projects Activity (IARPA)
and Canadian Institute for Advanced Research (CIFAR).

\appendix

%

\section{Weight distribution for the 60--105 system}

{
\scriptsize
\begin{verbatim}
> WeightDistribution(V);
[ <0, 1>, <4, 135>, <6, 810>, <8, 12195>, <9, 160>, <10, 113892>, <11, 18240>, 
<12, 1077285>, <13, 441600>, <14, 9540450>, <15, 7997824>, <16, 80906400>, <17, 
118015200>, <18, 688524520>, <19, 1448184000>, <20, 5961320616>, <21, 
15557419520>, <22, 52002701520>, <23, 147756103680>, <24, 441117024580>, <25, 
1254610425984>, <26, 3490721135520>, <27, 9499625852160>, <28, 24887073592740>, 
<29, 63507095523840>, <30, 155912963026760>, <31, 369822648368640>, <32, 
844216996941390>, <33, 1852875901104000>, <34, 3909633540468480>, <35, 
7917739173148416>, <36, 15397200649882050>, <37, 28734130298150400>, <38, 
51467429865611820>, <39, 88506321096591360>, <40, 146135139624541674>, <41, 
231792714654302400>, <42, 353282882649352920>, <43, 517597039127587200>, <44, 
729263310135826470>, <45, 988340133342723072>, <46, 1288880337830696700>, <47, 
1617684355058453760>, <48, 1954471451418300220>, <49, 2273535202515416640>, <50,
2546437247980289616>, <51, 2746415207269776000>, <52, 2852411008940091540>, <53,
2852701144397253120>, <54, 2747311965539513880>, <55, 2547589610965831680>, <56,
2274564123322337820>, <57, 1955193785568922240>, <58, 1617851718574207440>, <59,
1288608587407530240>, <60, 987792741688578932>, <61, 728611838041505280>, <62, 
517088519080163880>, <63, 352965614397949440>, <64, 231697797145211865>, <65, 
146214633571559808>, <66, 88658838120722880>, <67, 51642900930835200>, <68, 
28871970516908175>, <69, 15484467282700800>, <70, 7960297421809338>, <71, 
3916267265034240>, <72, 1843608398637195>, <73, 827932478585760>, <74, 
354477153134820>, <75, 144445514705216>, <76, 55639662848925>, <77, 
20412542826240>, <78, 6977966689330>, <79, 2267783587200>, <80, 689017459452>, 
<81, 187607370720>, <82, 55431880200>, <83, 10352153280>, <84, 4111118060>, <85,
293784576>, <86, 291511560>, <87, 1812480>, <88, 15413640>, <90, 423920> ]

Total time: 27706.459 seconds, Total memory usage: 11.03MB
\end{verbatim}
}

\end{document}